\newcommand{\CO}[1]
{
}
\begin{document}

\markboth{L. Pamies-Juarez and E. Biersack}{Cost Analysis of Redundancy Schemes for Distributed Storage Systems}

\title{Cost Analysis of Redundancy Schemes for Distributed Storage Systems}
\author{Lluis Pamies-Juarez
\affil{Universitat Rovira i Virgili (Tarragona, Spain)}
Ernst Biersack
\affil{Eurecom (Sophia-Antipolis, France)}
\vspace{1cm}\textnormal{\small April 15, 2011}
}

\begin{abstract} Distributed storage infrastructures require the use of data redundancy to achieve high data
reliability.  Unfortunately, the use of redundancy introduces storage and communication overheads, which can either
reduce the overall storage capacity of the system or increase its costs. To mitigate the storage and communication
overhead, different redundancy schemes have been proposed. However, due to the great variety of underlaying storage
infrastructures and the different application needs, optimizing these redundancy schemes for each storage infrastructure
is cumbersome.  The lack of rules to determine the optimal level of redundancy for each storage configuration leads
developers in industry to often choose simpler redundancy schemes, which are usually not the optimal ones. In this paper
we analyze the cost of different redundancy schemes and derive a set of rules to determine which redundancy scheme
minimizes the storage and the communication costs for a given system configuration. Additionally, we use simulation to
show that theoretically-optimal schemes may not be viable in a realistic setting where nodes can go off-line and repairs
may be delayed. In these cases, we identify which are the trade-offs between the storage and communication overheads of
the redundancy scheme and its data reliability.\end{abstract}

\category{E.4}{Coding and Information Theory}{Error Control Codes}
\category{E.5}{Files}{Backup/Recovery}
\category{C.2.4}{Computer-Communication Networks}{Distributed Systems}
\category{H.3.2}{Information Storage and Retrieval}{Information Storage}

\terms{Reliability, Performance}

\keywords{erasure correction codes, data reliability, data redundancy,
distributed storage systems, redundancy costs, regenerating codes, storage systems design.}

\begin{bottomstuff}
Author's addresses: L. Pamies-Juarez (lluis.pamies@urv.cat) {and} E. Biersack (ernst.biersack@eurecom.fr).
\end{bottomstuff}

\maketitle


\section{Introduction}
\label{s:introduction}
Distributed storage systems are widely used today for reasons of scalability and performance. There are distributed
file-systems such as Google FS~\cite{googlefs}, HDFS~\cite{hadoopfs}, GPFS~\cite{gpfs} or Dynamo~\cite{dynamo} and
peer-to-peer (P2P) storage applications like Wuala~\cite{wuala} or OceanStore~\cite{oceanstore}.

\CO{, behind cloud-storage services like Amazon's S3~\cite{s3}, or in edge-storage~\cite{echos} systems like
Cleversafe~\cite{cleversafe} or Farsite~\cite{farsite}.  There are several reasons why distributed storage mechanisms
are preferred over centric-based solutions including scalability, performance, and reliability. }

To achieve high reliability in distributed storage systems, a certain level of data redundancy is required.
Unfortunately, the use of redundancy increases the storage and communication costs of the system: (i) the space required
to store each file is increased, and (ii) additional communication bandwidth is required to repair lost data. It is
important to optimize redundancy schemes in order to minimize these storage and communication costs. For example, in
data centers where the energy cost associated with the storage sub-system represents about 40\% of the energy
consumption of all the IT components~\cite{energy}, minimizing storage cost can significantly reduce the per-byte cost
of the storage system. Whereas in less-reliable infrastructures |i.e. P2P systems| where the storage capacity is mainly
constrained by the cross-system communications bandwidth~\cite{blakerodrigues}, minimizing communication costs can
increase the overall storage capacity of the system.

Different redundancy schemes have been proposed to minimize the storage and communication costs associated with
redundancy. Redundancy schemes based on coding techniques such as Reed-Solomon codes~\cite{reedsolomon} or
LDPC~\cite{lpdcp2p} allow to achieve significant storage savings as compared to simple
replication~\cite{highavail,ecreplication,codingvsrepl,regeneratingcodes}. Moreover, recent advances in network coding
have lead to the design of Regenerating Codes~\cite{regeneratingcodes} that allow to reduce both, the storage and
communication costs, as compared to replication. While coding schemes can provide cost efficient redundancy in
production environments~\cite{hadoopec,ecgoogle,diskreduce}, distributed storage designers are still slow in adapting
advanced coding schemes for their systems. In our opinion, one reason for this reluctance is that coding schemes present
too many configuration trade-offs that make it difficult to determine the optimal configuration for a given storage
infrastructure.

Besides coding or replication one can also combine these two techniques into a hybrid redundancy scheme. In some
circumstances these hybrid redundancy schemes can reduce the costs of coding schemes~\cite{redschemes,glacier}. Besides
reducing costs, there are other reasons why maintaining whole file replicas in conjunction with encoded copies is
advantageous: (i) production systems using replication that want to reduce their costs without migrating their whole
infrastructure, (ii) peer-assisted cloud storage systems~\cite{assistedbackup}, like Wuala~\cite{wuala} that aim to
reduce the outgoing cloud bandwidth by combining cloud-storage with P2P storage, and (iii) storage systems needing
efficient file retrievals that cannot afford the computational costs inherent in coding schemes.  Unfortunately, there
are no studies analyzing under which conditions |i.e. node dynamics and network parameters|  hybrid schemes can reduce
the storage and communication costs as compared to simple replication.

Due to the great variety of redundancy schemes, it is complex to determine which redundancy scheme is the best for a
given infrastructure that is defined by properties like size (number of storage nodes), amount of stored data, node
dynamics, and cross-system bandwidth. The aim of this paper is to analyze the impact of different properties on the
storage and communication costs of the redundancy scheme.  We focus our analysis on Regenerating
Codes~\cite{regeneratingcodes}. As we will see in Section~\ref{s:regenerating}, Regenerating Codes provide a generic
framework that also allows us to analyze replication schemes and maximum-distance separable (MDS) codes such as
Reed-Solomon codes as specific instances  of Regenerating Codes.

The main contributions of our paper are as follows:
\begin{itemize}
\item This paper is the first to completely evaluate the storage and communication costs of  Regenerating Codes under
different system conditions.

\item For storage systems that need to maintain whole replicas of the stored files, we identify the conditions where a
hybrid scheme (replication+coding) can reduce the storage and communication costs of a simple replication scheme. 

\item Finally, we evaluate through simulation the effects that different redundancy scheme configurations have on the
scalability of the storage system. We show that some theoretically-optimal schemes cannot guarantee data reliability in
realistic storage environments.
\end{itemize}

The rest of the paper is organized as follows. In Section~\ref{s:relatedwork} we present the related work. In
sections~\ref{s:model} and~\ref{s:regenerating} we describe our storage model and Regenerating Codes.  In
Section~\ref{s:costs}, we analytically evaluate the storage and communication costs of Regenerating Codes. In
Section~\ref{s:hybrid} we analyze a hybrid redundancy scheme that combines Regenerating Codes and replication. Finally,
in Section~\ref{s:experiment} we validate and extend our analytical results using simulations, and in
Section~\ref{s:conclusions}, we state our conclusions.

\section{Related Work}
\label{s:relatedwork}

Tolerating node failures is a key requirement to achieve data reliability in distributed storage systems. Existing
distributed storage systems use different strategies to cope with these node failures depending on whether these
failures are transient |nodes reconnect without losing any data| or permanent |nodes disconnect and lose their data.  In
this section we present the existing techniques used to alleviate the costs caused by these transient and permanent node
failures.

Transient node failures cause temporal data unavailabilities that may prevent users from retrieving their stored files.
To tolerate transient node failures and guarantee high data availability, storage systems need to introduce data redundancy.
Redundancy ensures (with high probability) that files can be retrieved even when some storage nodes are temporally
off-line. The simplest way to introduce redundancy is by replicating each stored file. However, redundancy schemes based
on coding techniques can significantly reduce the amount of redundancy (and storage space) required while achieving the
same data reliability~\cite{codingvsrepl,replicastrategies}. Lin et al.~\cite{ecreplication} showed that such a
reduction in redundancy is only possible  when node on-line availabilities are high. For example, nodes must be more than 50\%
of the time on-line when files are stored occupying twice their original size, or more than 33\% of the time on-line
when files occupy three times their original size.

To cope with  permanent node failures, storage systems need to repair the lost redundancy. Unfortunately, repairing such
lost redundancy introduces communication overheads since it requires to move large amounts of data between nodes.  Blake
and Rodrigues~\cite{blakerodrigues} demonstrated that the communication bandwidth used by these repairs can limit the
scalability of the system in three main situations: (i) when the node failure rate is high, (ii) when the cross-system
bandwidth is low, (iii) or when the system stores too much data. Additionally, Rodrigues and Liskov~\cite{highavail}
compared replication and erasure codes in terms of communications overheads and concluded that when on-line node
availabilities are high, replication requires less communication than erasure codes.  These results pose a dilemma for
storage designers: \textit{ when node on-line availabilities are high, erasure codes minimize storage
overheads~\cite{ecreplication} and replication minimize communication overheads~\cite{highavail}}.

In order to reduce communication overheads for erasure codes, Wu et al.~\cite{redschemes} proposed the use of a hybrid
scheme combining erasure codes and replication. Although this technique slightly increases the storage overhead, it can
significantly reduce the communication overhead of erasure codes when node on-line availabilities are high. Another
technique used to minimize the communication overhead consists in lazy redundancy
maintenance~\cite{totalrecall,storagechurn} which amortizes the costs of several consecutive repairs. However, deferring
repairs can reduce the amount of available redundancy, requiring extra redundancy to guarantee the same data
reliability. Furthermore, lazy repairs lead to spikes in the network resource
usage~\cite{proactiveestim,proactivereplication}.

New coding schemes such as Hierarchical Codes or tree structured data regeneration have also been proposed to reduce the
communication overhead as compared to classical erasure codes~\cite{treeregenerating,hierarchicalcodes}. These solutions
propose storage optimizations that exploit heterogeneities in node bandwidth and node availabilities. \CO{
In~\cite{hierarchicalcodes,selfrepairing} communication costs are reduced by proposing asymmetric repairs: the algorithm
used to repair lost nodes in not the same for all nodes. However, exploiting node heterogeneities and asymmetrical
repairs introduce more difficulties to storage designers and complicates the choice of optimal redundancy
configurations.} Finally, Dimakis et~al. presented Regenerating Codes~\cite{regeneratingcodes} as a flexible redundancy
scheme for distributed storage systems. Regenerating Codes use ideas from network coding to define a new family of
erasure codes that can achieve different trade-offs in the optimization of storage and communication costs. This
flexibility allows to adjust the code to the underlaying storage infrastructure. However, there are no studies on how
Regenerating Codes should be adapted to these infrastructures, or how Regenerating Codes should be configured when
combined with file replication in hybrid schemes. In this paper we will use Regenerating
Codes~\cite{regeneratingcodes,regeneratingsurvey} as the base of our analysis on how to adapt and optimize redundancy
schemes for different underlying storage infrastructures and  different application needs.

\section{Modeling a Generic Distributed Storage System}
\label{s:model}

We consider a storage system where nodes dynamically join and leave the system~\cite{proactiveestim,datadurab}. We
assume that node lifetimes are random and follow some specific distribution $L$. Because of these dynamics, the number
of on-line nodes at time $t$, $N_t$, is a random process that fluctuates over time. Once stationarity is reached, we can
replace $N_t$ by its limiting version $N=\lim_{t\to\infty}N_t$.  Assuming that node arrivals follow a Poisson process
with a constant rate $\lambda$, then the average number of nodes in the system is
$N=\lambda\cdot\mathbb{E}[L]$~\cite{noavailability}. Additionally, it has been observed in real traces that during their
lifetime in the system, nodes present several off-line periods caused by transient
failures~\cite{globalkad,guha06experimental}. To model these transient failures, we model each node as an alternating
process between on-line and off-line states. The sojourn times at these states are random and follow two different
distributions: $\mathcal{X}_\text{on}$ and $\mathcal{X}_\text{off}$ respectively. Using these distributions we can
measure the node on-line availability in stationary state as~\cite{heterogeneouschurn}: $$a =
\frac{\mathbb{E}[\mathcal{X}_\text{on}]}{\mathbb{E}[\mathcal{X}_\text{on}]+\mathbb{E}[\mathcal{X}_\text{off}]}.$$

All the $N$ nodes in the system are responsible to store a constant amount of data that is uniformly distributed among
the $N$ nodes. To model different data granularity, we will consider that this total amount of stored data corresponds
to $O$ different data files of size $\mathcal M$ bytes. However, since each of these files is stored with redundancy,
the total disk space required to store each file is $R\cdot\mathcal M$, being $R$ the redundancy factor (or stretch
factor). The value of $R$ is set to guarantee that files are always available with a probability, $p$, that is very
close to one.

When a node reaches the end of its life, it abandons the system, losing all the data stored on it. A repair process is
responsible to recreate the lost redundancy and to ensure that the retrieve probability, $p$, is not compromised. There
are three main approaches used to recreate redundancy when nodes fail:
\begin{enumerate}
\item {\bf Eager repairs}: Lost redundancy is repaired on demand immediately after a node failure is detected.
\item {\bf Lazy repairs}: The system waits until a certain number of nodes had failed and repairs them all at once.
\item {\bf Proactive repairs}: The system schedules the insertion of new redundancy at a constant rate, which is set
according to the average node failure rate.
\end{enumerate}
In our storage model we will assume the use of \textit{proactive repairs}. Compared to eager repairs, proactive repairs
simplify the analysis of the communication costs. Furthermore, while  lazy repair   can reduce the maintenance costs by
amortizing the communication costs across several repairs~\cite{storagechurn}, it presents some important drawbacks: (i)
delaying repairs leads to  periods with low-redundancy that makes the system vulnerable; (ii) lazy repairs cause
network resource usage to occur in bursts, creating spikes of system utilization~\cite{proactiveestim}. By adapting a
proactive repair strategy, communication overheads are evenly distributed in time  and we can analyze the storage system
in its steady state~\cite{proactiveestim}.

\section{Regenerating Codes}
\label{s:regenerating}

Regenerating Codes~\cite{regeneratingcodes} are a family of erasure codes that allow to trade-off communication cost for
storage cost and vice versa. To store a file of size $\mathcal M$ bytes, Regenerating Codes generate $n$ blocks each to
be stored on a different storage node. Each of these \textbf{storage blocks} has a size of $\alpha$, which makes the
file \textbf{stretch factor} $R$ to be $R=n\alpha/\mathcal M$. When a storage node leaves the system or when a failure
occurs, a new node can repair the lost block by downloading a \textbf{repair block} of size $\beta$ bytes, $\beta\leq\alpha$,
from any set of $d$ out of $n-1$ alive nodes ($k \leq d\leq n-1$). We will refer to $d$ as the \textbf{repair
degree}. The total amount of data received by the repairing node, $\gamma$, $\gamma=d\beta$, is called the \textbf{repair
bandwidth}. Finally, a node can \textit{reconstruct} the file by downloading $\alpha$ bytes (the entire storage block)
from $k$ different nodes. In Figure~\ref{f:infoflow} we depict the basic operations of file retrieve and block repair
for a Regenerating Code . The labels at the edges indicate to the amount of data transmitted between nodes during each
operation.

\begin{figure}
\centering
\includegraphics[width=.55\textwidth]{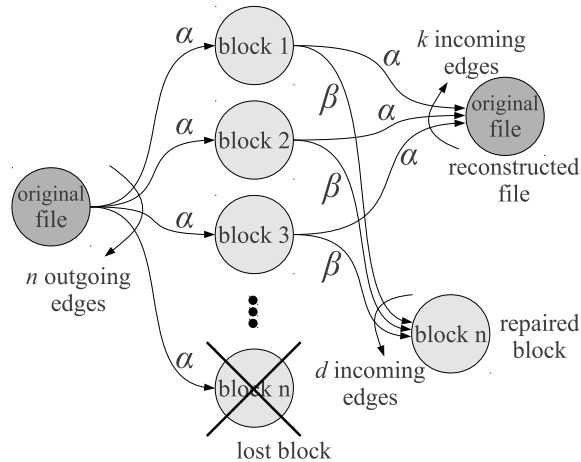}
\caption{Scheme for the repair and retrieve operations of Regenerating Codes.}
\label{f:infoflow}
\end{figure}

Dimakis et~al.~\cite{regeneratingcodes} gave the conditions that the set of parameters $(n,k,d,\alpha,\gamma=d\beta)$
must satisfy to construct a valid Regenerating Code. Basically, once the subset of parameters: $(n,k,d)$ is fixed,
Dimakis~et~al. obtained an analytical expression for the relationship between the values of $\alpha$ and $\gamma$.  This
$\alpha$-$\gamma$ relationship presents a trade-off curve: the larger $\alpha$, the smaller $\gamma$, and vice-versa. It
means that it is impossible to simultaneously minimize both, communication cost and storage cost. Since the maximum
storage capacity of the system can be constrained either by bandwidth bottlenecks or disk storage bottlenecks, there are
two extreme ($\alpha,\gamma$)-points of this trade-off curve that are of special interest w.r.t.~maximizing the storage
capacity.  The first is the point where the storage block size $\alpha$ per node is minimized, which is referred to as
\textbf{ Minimum Storage Regenerating} (MSR) code.  The second is the point where the repair bandwidth $\gamma$ is
minimized, which is referred to as \textbf{Minimum Bandwidth Regenerating} (MBR) code.  According
to~\cite{regeneratingsurvey}, the storage block size ($\alpha$) and the repair bandwidth ($\gamma$) for MSR and MBR
codes are:

\begin{align}
\left( \alpha_\text{MSR},\gamma_\text{MSR}\right) &= \left(\frac{\mathcal M}{k},~\frac{\mathcal
M}{k}\frac{d}{d-k+1}\right)\label{e:parmsr}\\
\left( \alpha_\text{MBR},\gamma_\text{MBR}\right) &= \left(\frac{\mathcal M}{k}\frac{2d}{2d-k+1},~\frac{\mathcal
M}{k}\frac{2d}{2d-k+1}\right)\label{e:parmbr}
\end{align}

There are two particular MSR configurations of special interest:
\begin{itemize}
\item \emph{Maximum-distance separable (MDS) codes:} In MSR codes, when $d=k$, we obtain
$\beta_\text{MSR}=\alpha_\text{MSR}$ and the Regenerating Code behaves exactly like a traditional MDS codes such as a
Reed Solomon code~\cite{reedsolomon}. In this case, the repair bandwidth, $\gamma_\text{MDS}\left[k=d\right]$, is
identical to the size of the original file, $\mathcal M$:
$$\gamma_\text{MDS}\left[k=d\right]=d~\beta_\text{MSR}=k~\alpha_\text{MSR}=k~\frac{\mathcal M}{k}=\mathcal M.$$

\item \emph{File replication:} In MSR codes, when $k=d=1$, the code becomes a simple replication scheme where the $n$
storage nodes each store a complete copy of the original file. For $k=d=1$, the storage block size,
$\alpha_\text{MSR}\left[k=d=1\right]$, and the repair bandwidth, $\gamma_\text{MSR}\left[k=d=1\right]$, are equal to the
size of the original file, $\alpha_\text{MSR}\left[k=d=1\right]=\gamma_\text{MSR}\left[k=d=1\right]=\mathcal M$.
\end{itemize}

\begin{table}
\tbl{Symbols used.\label{t:symbols}}{%
\centering
\begin{tabular}{|c|l|}\hline
$N$ & Average number of storage nodes. \\ \hline
$\lambda$ & Node arrival/departure rate (nodes/sec.). \\ \hline
$L$ & Distribution of the node lifetime (sec.). \\ \hline
$a$ & Node on-line availability. \\ \hline
$O$ & Number of stored files. \\ \hline
$\mathcal M$ & Size of the stored files (bytes).\\ \hline
$\omega$ & Service bandwidth of each node (KBps).\\ \hline
$p$ & Data availability. Probability of detecting $k$ blocks on-line.\\ \hline
$n$ & Number of storage blocks.\\ \hline
$k$ & Retrieval degree: number of blocks required for retrieval of original data.\\ \hline
$d$ & Repair degree: number of blocks required for repair.\\ \hline
$\alpha$ & Storage block size.\\ \hline
$\beta$ & Repair block size.\\ \hline
$\gamma$ & Repair bandwidth.\\ \hline
\end{tabular}}
\end{table}
In Table~\ref{t:symbols} we summarize the symbols used throughout the paper.

\section{Cost Analysis}
\label{s:costs}

\subsection{Redundancy Cost}
\label{s:redcosts}

In Section~\ref{s:regenerating} we defined \textbf{data redundancy}  as $R=n\alpha/\mathcal M$.
In this section we aim to measure the minimum $R$ required to guarantee a desired \textbf{file retrieve probability} $p$.
Since in Regenerating Codes the retrieval process needs to download $k$ different blocks out of the total $n$ blocks,
the retrieve probability $p$ is measured as~\cite{ecreplication},
\begin{equation}
p = \sum_{i=k}^{n} \binom{n}{i} a^i(1-a)^{n-i}.
\label{e:durab}
\end{equation}
Given the values of $k$, $a$ and $p$, we can use eq.~(\ref{e:durab}) to determine the minimum number of redundant blocks
required to guarantee a certain retrieve probability $p$ using the function $\eta$:
\begin{equation}
\eta[k,a,p]=\text{min}\left\{n':~~p \leq \sum_{i=k}^{n'} \binom{n'}{i} a^i(1-a)^{n'-i},~n'\geq k\right\}.
\label{e:n}
\end{equation}
Note that the number of redundant blocks required to achieve $p$ is a function of the repair degree, $k$, the node
on-line availability, $a$, and $p$. In the rest of this paper we will use the notation $\eta[k,a,p]$ to refer to the
number of storage blocks $n$ required to achieve a retrieve
probability $p$ for the specific $k$ and $a$ values.  

Since data redundancy is $R=n\alpha/\mathcal M$, we can obtain the redundancy required by MSR and MBR codes,
$R_\text{MSR}$ and $R_\text{MBR}$ respectively, by substituting
$\alpha$ with the expressions given for $\alpha$  in
eq.~(\ref{e:parmsr}) and eq.~(\ref{e:parmbr}):
\begin{align}
R_\text{MSR} &= \frac{\eta[k,a,p]\cdot\alpha_\text{MSR}}{\mathcal M} = \frac{\eta[k,a,p]\cdot\left(\mathcal
M/k\right)}{\mathcal M} = \boxed{\frac{\eta[k,a,p]}{k}} \label{e:red_cod_msr}\\
R_\text{MBR} &= \frac{\eta[k,a,p]\cdot\alpha_\text{MBR}}{\mathcal M} = \frac{\eta[k,a,p]\cdot\left(2d\mathcal
M/(k(2d-k+1))\right)}{\mathcal M} = \boxed{\frac{2d\cdot \eta[k,a,p]}{k(2d-k+1)}}
\label{e:red_cod_mbr}
\end{align}

Using these expressions we can state the following lemma:
\begin{lemma}
For $n$, $k$ and $d$ fixed, the redundancy $R_\text{MSR}$, required by MSR codes is always smaller than or equal
to the redundancy $R_\text{MBR}$  required by MBR codes.
\label{l:rmsr}
\end{lemma}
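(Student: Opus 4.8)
The plan is to reduce the claimed inequality to a one-line comparison of two scalars by exploiting the fact that the boxed expressions~(\ref{e:red_cod_msr}) and~(\ref{e:red_cod_mbr}) share a common positive factor. Concretely, write
\begin{equation}
R_\text{MSR}=\frac{\eta[k,a,p]}{k},\qquad
R_\text{MBR}=\frac{\eta[k,a,p]}{k}\cdot\frac{2d}{2d-k+1},
\label{e:rmsr_plan}
\end{equation}
so that, since $\eta[k,a,p]\ge k\ge 1>0$, the inequality $R_\text{MSR}\le R_\text{MBR}$ is equivalent to showing that the multiplicative factor $\frac{2d}{2d-k+1}$ is at least $1$.

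First I would check that the denominator $2d-k+1$ is strictly positive, so that the fraction is well defined and the direction of the inequality is preserved under cross-multiplication: from the Regenerating Code constraints we have $k\le d\le n-1$, hence $2d-k+1\ge 2k-k+1=k+1>0$. Next I would cross-multiply $1\le \frac{2d}{2d-k+1}$ to obtain the equivalent statement $2d-k+1\le 2d$, i.e.\ $1\le k$, which holds because the retrieval degree satisfies $k\ge 1$. Chaining these equivalences back up through~(\ref{e:rmsr_plan}) yields $R_\text{MSR}\le R_\text{MBR}$, with equality precisely when $k=1$ (the replication case).

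There is essentially no hard step here; the only thing to be careful about is bookkeeping. In particular one must note that $n=\eta[k,a,p]$ is the \emph{same} quantity in both $R_\text{MSR}$ and $R_\text{MBR}$ (it depends only on $k$, $a$, $p$, consistent with "$n$, $k$, $d$ fixed"), so the common factor genuinely cancels, and one must verify the sign of $2d-k+1$ before cross-multiplying rather than assuming it. Given those two observations, the argument is a two-line computation.
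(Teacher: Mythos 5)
Your proof is correct, but it follows a different route than the paper's. The paper's own argument cancels the common factor $\eta[k,a,p]/\mathcal M$ and reduces the claim to $\alpha_\text{MSR}\leq\alpha_\text{MBR}$, which it then disposes of by appealing to the definition of the two codes (MSR is by construction the point of the trade-off curve with minimal storage block size, so its $\alpha$ cannot exceed that of any other point, in particular MBR). You instead substitute the closed-form expressions from eq.~(\ref{e:parmsr}) and eq.~(\ref{e:parmbr}) into the boxed formulas~(\ref{e:red_cod_msr}) and~(\ref{e:red_cod_mbr}), verify that $2d-k+1\geq k+1>0$ using $d\geq k\geq 1$, and reduce the inequality to $1\leq k$ by cross-multiplication. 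This is essentially the same algebra the paper later uses to prove Lemma~\ref{l:hybbw} for the hybrid scheme, so your argument is fully consistent with the paper's toolkit; it buys you a self-contained, explicitly verifiable computation (including the equality case $k=1$, i.e.\ replication) without leaning on the cited definition, whereas the paper's appeal to the MSR/MBR optimality property is shorter and would remain valid for any parametrization of the extreme points, not just the particular formulas quoted. Your care about the sign of $2d-k+1$ and about $\eta[k,a,p]$ being the same quantity on both sides is exactly the right bookkeeping; there is no gap.
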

\begin{proof}
We can state the lemma as $R_\text{MSR}\leq R_\text{MBR}$. Using equations~(\ref{e:red_cod_msr})
and~(\ref{e:red_cod_mbr}) we obtain:
\begin{align*}
\frac{\eta[k,a,p]\cdot~\alpha_\text{MSR}}{\mathcal M} &\leq \frac{\eta[k,a,p]\cdot~\alpha_\text{MBR}}{\mathcal M} \\
\alpha_\text{MSR} &\leq \alpha_\text{MBR},
\end{align*}
which is true by the definition of MSR codes and MBR codes~\cite{regeneratingcodes}.
\end{proof}

\begin{figure}
\centering
\subfloat[Redundancy for MSR codes (includes MDS codes).]{\label{f:redund:msr}\includegraphics{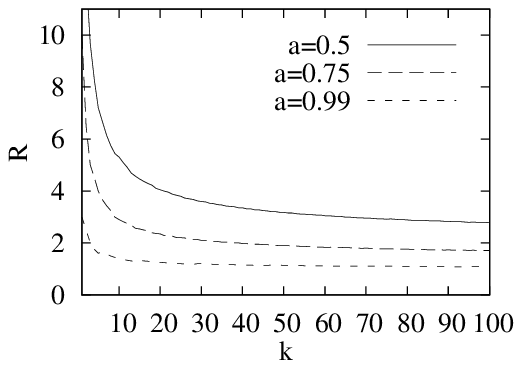}}
\subfloat[Redundancy for MBR codes.]{\label{f:redund:mbr}\includegraphics{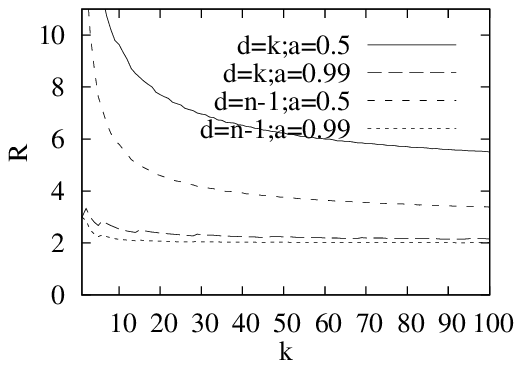}}
\quad
\subfloat[Value of {$\eta[k,a,0.999999]$}.]{\label{f:redund:n}\includegraphics{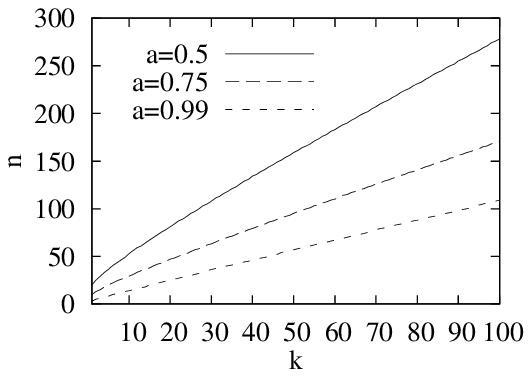}}
\caption{Redundancy $R$ required to achieve a retrieve probability $p=0.999999$ for MSR and MBR codes as a function of
the retrieve degree $k$. Each plot in (a) and (b) depicts the redundancy evaluated using eq.~(\ref{e:red_cod_msr}) and
eq.~(\ref{e:red_cod_mbr}) for different values of  $d$, and different values of the node on-line availability
$a$. In (c) we plot the number of storage blocks $n$ required to achieve the retrieve probability $p$ for each case.}
\label{f:redund}
\end{figure}

In Figure~\ref{f:redund:msr} and \ref{f:redund:mbr} we plot the redundancy $R$ required to achieve a retrieve
probability $p=0.999999$ for MSR   and MBR codes. We plot the values of $R$ as a function of the
retrieve degree, $k$, and for different node availabilities, $a$. Additionally, for MBR codes we also depict the values
of $R_\text{MBR}$ for the two extreme repair degree values: $d=k$ and $d=n-1$. We do not evaluate $R_\text{MSR}$ for
different $d$ values because $R_\text{MSR}$ is independent of $d$ (see eq.~(\ref{e:red_cod_msr})). In
Figure~\ref{f:redund:n} we use eq.~(\ref{e:n}) to plot the number of blocks, $\eta[k,a,p]$, used in
figures~\ref{f:redund:msr} and \ref{f:redund:mbr} for the retrieve probability $p=0.999999$.

In Figure~\ref{f:redund} we can see that for MSR and MBR, increasing $k$ reduces $R$, and therefore, reduces storage
costs. Additionally, comparing figures~\ref{f:redund:msr} and~\ref{f:redund:mbr} we can appreciate the consequences of
Lemma~\ref{l:rmsr}: for a given node availability, $a$, and a retrieve degree $k$, the redundancy required for MSR codes
is always smaller than the redundancy required for MBR codes. Finally, we can see that $R$ first quickly deceases with
increasing $k$ before it reaches its asymptotic values. There is no point in choosing $k$ very large to minimize the
storage costs of MSR and MBR codes, since large $k$ values induce a very high computational cost for coding and
decoding~\cite{practicalregenerating}. We therefore recommend to use values for $k$ where the redundancy $R$ starts
approaching the asymptote, namely $k=5$ for $a=0.99$, $k=20$ for $a=0.75$ and $k=50$ for $a=0.5$. In
Table~\ref{t:savings} we provide the redundancy savings achieved by using these $k$ values.

\begin{table}
\tbl{Storage space savings for adopting a Regenerating Code  instead of  replication. We use different $k$
values for each on-line node availability and a target retrieve probability of $p=0.999999$.}{%
\centering
\begin{tabular}{|l|c|c|c|} \hline
\parbox{2cm}{\vspace{.1cm}\vspace{.1cm}} & 
\parbox{2cm}{\vspace{.1cm}\centering $a=0.5$\\$k=50$\vspace{.1cm}} & 
\parbox{2cm}{\vspace{.1cm}\centering $a=0.75$\\$k=20$\vspace{.1cm}} &
\parbox{2cm}{\vspace{.1cm}\centering $a=0.99$\\$k=5$\vspace{.1cm}} \\ \hline
MSR & 47\% & 77\% & 84\% \\ \hline
MBR ($d=k$) & 69\% & 55\% & 11\% \\ \hline
MBR ($d=n-1$) & 81\% & 70\% & 25\% \\ \hline
\end{tabular}}
\label{t:savings}
\end{table}

\subsection{Communication Costs}

When a node fails, the system must repair all the data blocks stored on the failed node.  Repairing each of these blocks
requires to transfer data between nodes, which entails a communication cost. In this section we measure the minimum
per-node bandwidth required to sustain the overall repair traffic of the storage system.  We will first compute the
total amount of data that is transfered within the system during a period of time $\Delta$:
\begin{align}
\text{data transfered during }\Delta ~= ~~&\text{nodes failed during }\Delta~~\times~~\text{blocks stored per
node}~~\times\label{e:deltatraff}\\\nonumber\times~~&\text{traffic to repair one block}.
\end{align}

Assuming that there are $N$ nodes with an average lifetime $\mathbb{E}[L]$, the average number of nodes that fail during a
period $\Delta$ is $\Delta N/\mathbb{E}[L]$~\cite{proactiveestim}. Additionally, assuming that data blocks are uniformly
distributed between all storage nodes, the average number of blocks stored per node is $n\cdot O/N$. Finally, since the
traffic required to repair one failed block is $\gamma$, we can rewrite eq.~(\ref{e:deltatraff}) as:
\begin{equation}
\text{data transfered during }\Delta = \left(\Delta~\frac{N}{\mathbb{E}[L]}\right)\times \left(\frac{n~O}{N}\right) \times
\gamma
\label{e:deltatraff2}
\end{equation}

Then, the minimum per-node bandwidth, $W$, required to ensure that all stored data can be successfully repaired is the ratio between
the amount of data transmitted per unit of time (in seconds), and the average number of on-line nodes, $aN$:
\begin{equation}
W = \frac{\text{data transfered during }\Delta}{\Delta\times\text{avg. number on-line nodes}} =
\frac{\gamma~n~O}{a~N~\mathbb{E}[L]}.
\label{e:bw}
\end{equation}
Assuming that the repair bandwidth, $\gamma$, is given in KB, and the node lifetime, $L$, in seconds, then the minimum
per-node bandwidth $W$ is expressed in KBps. Assuming that the upload bandwidth of each node is always smaller than or
equal to the download bandwidth, this minimum per-node bandwidth, $W$, represents the minimum upload bandwidth required
by each node.

If we use the values of the repair bandwidth $\gamma$ given in equations~(\ref{e:parmsr}) and~(\ref{e:parmbr}), we
obtain the minimum per-node bandwidth for each Regenerating Code configuration:
\begin{align}
W_\text{MSR} &= \gamma_\text{MSR}\cdot\frac{\eta[k,a,p]~O}{a~N~\mathbb{E}[L]} = \frac{\mathcal
M}{k}\frac{d}{(d-k+1)}\frac{\eta[k,a,p]~O}{a~N~\mathbb{E}[L]} = \boxed{\frac{d\cdot
\eta[k,a,p]}{ak(d-k+1)}\frac{O~\mathcal M}{N\mathbb{E}[L]}}
\label{e:WMSR}\\
W_\text{MBR} &= \gamma_\text{MBR}\cdot\frac{\eta[k,a,p]~O}{a~N~\mathbb{E}[L]} = \frac{\mathcal
M}{k}\frac{2d}{(2d-k+1)}\frac{\eta[k,a,p]~O}{a~N~\mathbb{E}[L]} = \boxed{\frac{2d\cdot
\eta[k,a,p]}{ak(2d-k+1)}\frac{O~\mathcal M}{N\mathbb{E}[L]}}
\label{e:WMBR}
\end{align}

Taking these two expressions we can state the following lemma:
\begin{lemma}
For the same $n$, $k$ and $d$ parameters, the per-node bandwidth required by MBR codes, $W_\text{MBR}$, is always smaller
than or equal to the per-node bandwidth required by MSR codes, $W_\text{MSR}$.
\label{l:bwmbr}
\end{lemma}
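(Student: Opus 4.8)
The plan is to reduce the claimed inequality $W_\text{MBR}\le W_\text{MSR}$ to an elementary comparison between the repair bandwidths $\gamma_\text{MBR}$ and $\gamma_\text{MSR}$. Looking at the boxed expressions in eq.~(\ref{e:WMSR}) and eq.~(\ref{e:WMBR}), both $W_\text{MSR}$ and $W_\text{MBR}$ are equal to the common positive factor $\frac{\eta[k,a,p]}{ak}\cdot\frac{O\,\mathcal M}{N\,\mathbb{E}[L]}$ multiplied by $\frac{d}{d-k+1}$ and $\frac{2d}{2d-k+1}$ respectively; equivalently, $W_\text{MSR}=\gamma_\text{MSR}\cdot c$ and $W_\text{MBR}=\gamma_\text{MBR}\cdot c$ with the \emph{same} constant $c=\frac{\eta[k,a,p]\,O}{a\,N\,\mathbb{E}[L]}>0$ (here it is essential that $n=\eta[k,a,p]$ is the same for both codes, since $n$ depends only on $k$, $a$ and $p$). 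Hence the lemma is equivalent to $\gamma_\text{MBR}\le\gamma_\text{MSR}$.

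First I would dispose of this inequality by the defining property of MBR codes: among all valid Regenerating Code operating points with parameters $(n,k,d)$, the MBR point is by definition the one that minimizes the repair bandwidth $\gamma$, so $\gamma_\text{MBR}\le\gamma$ for every admissible $\gamma$, and in particular $\gamma_\text{MBR}\le\gamma_\text{MSR}$~\cite{regeneratingcodes}. Multiplying through by $c>0$ gives the claim. Alternatively, and more self-containedly, I would verify the inequality directly from eq.~(\ref{e:parmsr}) and eq.~(\ref{e:parmbr}): cancelling the common factor $\mathcal M/k$, we must show $\frac{2d}{2d-k+1}\le\frac{d}{d-k+1}$, i.e.\ $\frac{2}{2d-k+1}\le\frac{1}{d-k+1}$, i.e.\ $2(d-k+1)\le 2d-k+1$, which simplifies to $1\le k$. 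Since $k\ge 1$ always holds, and since $d\ge k\ge 1$ guarantees that both denominators $d-k+1$ and $2d-k+1$ are strictly positive (so the cross-multiplication is valid), the inequality follows, with equality exactly when $k=1$.

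I do not expect any real obstacle here: the only points requiring care are (i) noting that the proportionality constant relating $W$ to $\gamma$ is identical for the two codes because $n$ is determined solely by $k$, $a$ and $p$ via $\eta$, and (ii) checking positivity of the denominators under the standing assumption $k\le d\le n-1$ so that the algebraic manipulation preserves the inequality direction. After those remarks the proof is a one-line computation, mirroring the structure of the proof of Lemma~\ref{l:rmsr}.
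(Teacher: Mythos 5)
Your proposal is correct and follows essentially the same route as the paper: cancel the common positive factor $\frac{\eta[k,a,p]\,O}{a\,N\,\mathbb{E}[L]}$ so the claim reduces to $\gamma_\text{MBR}\leq\gamma_\text{MSR}$, which the paper simply attributes to the definition of MSR and MBR codes. Your additional explicit check via eq.~(\ref{e:parmsr}) and eq.~(\ref{e:parmbr}), reducing the inequality to $1\leq k$ with attention to the positivity of the denominators, is a sound self-contained verification of that final step, but it does not change the structure of the argument.
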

\begin{proof}
We can state the lemma as $W_\text{MBR}\leq W_\text{MSR}$. Using equations~(\ref{e:WMSR})
and~(\ref{e:WMBR}) we obtain:
\begin{align*}
\gamma_\text{MBR}\cdot\frac{\eta[k,a,p]~O}{a~N~\mathbb{E}[L]} &\leq
\gamma_\text{MSR}\cdot\frac{\eta[k,a,p]~O}{a~N~\mathbb{E}[L]} \\
\gamma_\text{MBR} &\leq
\gamma_\text{MSR},
\end{align*}
which is true by the definition of MSR codes and MBR codes from~\cite{regeneratingcodes}.
\end{proof}

In the rest of this section we analyze the per-node bandwidth requirements, $W$, for MSR and MBR codes. Since in
eq.~(\ref{e:WMSR}) and eq.~(\ref{e:WMBR}) the term $\frac{O\mathcal M}{N\mathbb{E}[L]}$ does not depend on the
Regenerating Code parameters, $n, k, d$, we will assume that $\frac{O\mathcal M}{N\mathbb{E}[L]}=1$.  To obtain the
minimum per-node bandwidth, we simply have to multiply $W$ times $\frac{O\mathcal M}{N \mathbb{E}[L]}$.

\begin{figure}
\centering
\subfloat[MSR when $d=k$ and MDS.]{\label{f:bwenc:msrlow}\includegraphics{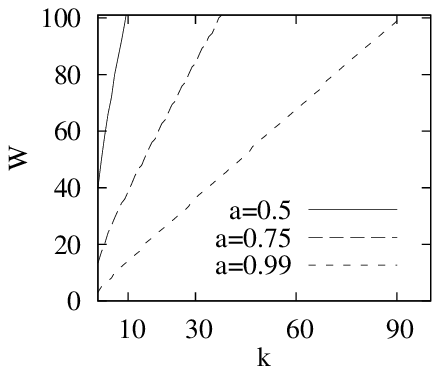}}
\subfloat[MSR when $d=n-1$.]{\label{f:bwenc:msrhigh}\includegraphics{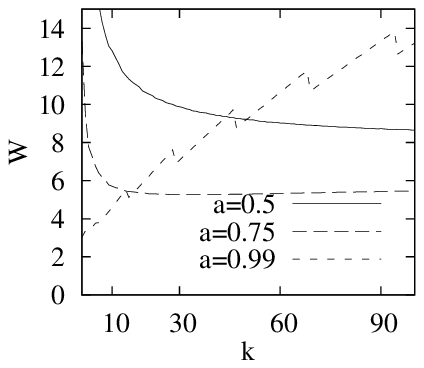}}
\caption{We use eq.~(\ref{e:WMSR}) to show the per-node bandwidth required to achieve $p=0.999999$
for MSR codes.}
\label{f:bwenc}
\end{figure}

\paragraph{Communication Cost for MSR Codes} In Figure~\ref{f:bwenc} we use eq.~(\ref{e:WMSR}) to analyze the per-node
bandwidth requirements of MSR codes when the required retrieve probability is $p=0.999999$. We plot the results for
$d=k$ and $d=n-1$ and for three different on-line node availabilities:
\begin{itemize}
\item 
For $d=k$ we can see in Figure~\ref{f:bwenc:msrlow} how the per-node bandwidth of a MDS code such as a Reed-Solomon
code, is linear in $k$. In this case, the lowest per-node bandwidth is achieved when $k=1$, which corresponds to a
simple replication scheme.
\item
For $d=n-1$, however, we can see in Figure~\ref{f:bwenc:msrhigh} that the per-node bandwidth is asymptotically
decreasing in $k$.  However, as already said, we recommend to choose $k=20$ when $a=0.75$ and $k=50$ when $a=0.5$.
Finally, we can see that for $a=0.99$, $W_\text{MBR}$ is not an asymptotically decreasing function: As $a$ tends to one,
the number of required blocks, $\eta[k,a,p]$, tends to $k$ (see eq.~(\ref{e:n})) and  the case $d=n-1$  is identical to
the case $d=k$, which is depicted in sub-figure~\ref{f:bwenc:msrlow}.
\end{itemize}

In Figure~\ref{f:bwenc:msrlow} we saw that MDS codes ($d=k;k>1$) do not reduce the per-node bandwidth as compared to
replication ($d=k=1$) while in Figure~\ref{f:bwenc:msrhigh} we saw that for $d>k$, a MSR code can reduce the bandwidth
as compared to replication except for high node on-line availabilities ($a=0.99$). We now want to determine the maximum
node on-line availability, $a$, for which a MSR code can reduce the per-node bandwidth requirement as compared to
replication. Let us denote by $W_\text{MSR}[k=d=1]$ the per-node bandwidth required by replication and
$W_\text{MSR}[k>1,d\geq k]$ denote the per-node bandwidth required by a MSR code. Then, a MSR reduces the
bandwidth required by replication when the following inequality holds:
\begin{equation}
W_\text{MSR}[k=d=1]\geq W_\text{MSR}[k>1,d\geq k]  
\label{e:Wbw}
\end{equation}

\begin{table}
\tbl{Minimum $d$ values to construct MSR codes that requiring less repair bandwidth than simple
file replication. The target retrieve probability is $p=0.999999$.\label{t:ds}}{%
\centering
\begin{tabular}{|c|c|c|c|}\hline
 & \multicolumn{3}{c|}{minimum repair degree satisfying~eq.~(\ref{e:Wbw}) and the value of $n$.}\\ \hline
Node availability & $k=50$ & $k=20$ & $k=5$ \\ \hline
$a=0.5$ & $n=159;d=59$ & $n=81;d=24$ & $n=36;d=7$ \\ \hline
$a=0.75$ & $n=95;d=61$ & $n=47;d=25$ & $n=20;d=7$ \\ \hline
$a=0.9$ & $n=71;d=65$ & $n=34;d=27$ & $n=13;d=8$ \\ \hline
$a=0.92$ & $n=69;d=64$ & $n=32;d=26$ & $n=12;d=7$ \\ \hline
$a=0.95$ & $n=64;d=--$ & $n=29;d=27$ & $n=11;d=8$ \\ \hline
$a=0.97$ & $n=61;d=--$ & $n=27;d=--$ & $n=10;d=9$ \\ \hline
$a=0.99$ & $n=57;d=--$ & $n=25;d=--$ & $n=8;d=--$ \\ \hline
\end{tabular}}
\end{table}

Table~\ref{t:ds} shows the minimum $d$ that satisfies the inequality defined in eq.~(\ref{e:Wbw}) for different on-line
node availabilities, $a$, and different retrieve degrees $k$. We additionally provide the number of storage blocks, $n$,
required to achieve $p=0.999999$. We can see  that for low node availabilities small values of $d$, slightly larger than
$k$, are sufficient to reduce the per-node bandwidth required by replication.  However, for high on-line node
availabilities, the minimum value of $d$ satisfying  eq.~(\ref{e:Wbw}) becomes larger than $n-1$, which is not a valid
Regenerating Code configuration.  This maximum on-line availability becomes higher for low $k$ values, namely $a\geq
0.95$ for $k=50$, $a\geq 0.97$ for $k=20$ and $a\geq 0.99$ for $k=5$.  We can generally state that for high on-line node
availabilities, \textit{replication becomes more bandwidth efficient than any MSR code}, which  confirms the result
obtained by Rodrigues and Liskov in~\cite{highavail}.

\begin{figure}
\centering
\subfloat[MBR when $d=k$.]{\label{f:bwenc:mbrlow}\includegraphics{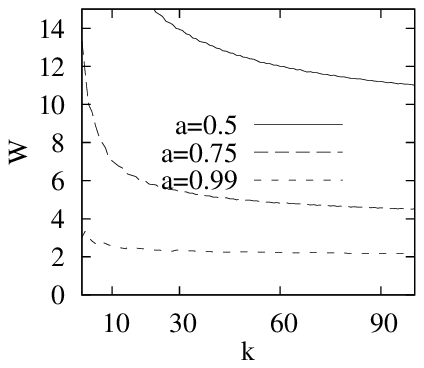}}
\subfloat[MBR when $d=n-1$.]{\label{f:bwenc:mbrhigh}\includegraphics{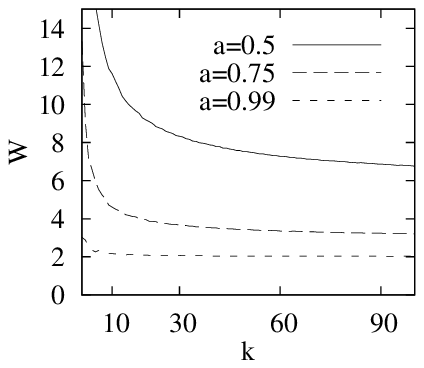}}
\caption{Per-node bandwidth required to achieve $p=0.999999$ for MBR codes using eq.~(\ref{e:bw}).}
\label{f:bwenc2}
\end{figure}

\paragraph{Communication Cost for MBR Codes} In Figure~\ref{f:bwenc2} we   plot the required per-node bandwidth of MBR
codes for  $d=k$ and $d=n-1$. For MBR codes, in difference to MSR codes,  we can see that for both $d$ values the
required per-node bandwidth  $W$  asymptotically decreases with increasing  $k$ and we can state:
\begin{remark}
For MBR codes $W_\text{MBR}\left[k=k'\right] \geq W_\text{MBR}\left[k=k'+1\right]$.
\label{r:mbrdecr}
\end{remark}
From Lemma~\ref{l:bwmbr} we know that for the same configuration, MBR codes are more bandwidth efficient than MSR
codes. Using Remark~\ref{r:mbrdecr} we can now state that all MBR codes are also more bandwidth efficient than simple
replication, which is a special case of MSR:
\begin{lemma}
The per-node bandwidth requirements of MBR codes are lower than or equal to the per-node bandwidth requirements of simple
replication: $W_\text{MBR} \leq W_\text{MSR}\left[k=d=1\right]$.
\end{lemma}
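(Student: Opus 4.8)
The plan is to sandwich an arbitrary MBR code between simple replication and the two monotonicity facts just established, using that replication is nothing but the MSR code with $k=d=1$ (Section~\ref{s:regenerating}), so $W_\text{MSR}[k=d=1]$ is a genuine instance of eq.~(\ref{e:WMSR}). Concretely I would fix an arbitrary valid MBR configuration $(n,k,d)$ with $k\leq d\leq n-1$ and walk it down to replication in two moves, each of which does not increase the per-node bandwidth.

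The first move sends $d$ down to $k$. Since $n=\eta[k,a,p]$ does not depend on $d$, a one-line check on eq.~(\ref{e:WMBR}) shows that, with $k$ and $n$ held fixed, $W_\text{MBR}$ is non-increasing in $d$ (the factor $\frac{2d}{2d-k+1}$ has derivative in $d$ proportional to $1-k\leq 0$), so $W_\text{MBR}[n,k,d]\leq W_\text{MBR}[n,k,d{=}k]$. The second move sends $k$ down to $1$ along the locus $d=k$: iterating Remark~\ref{r:mbrdecr} gives $W_\text{MBR}[\cdot,k,k]\leq W_\text{MBR}[\cdot,k-1,k-1]\leq\cdots\leq W_\text{MBR}[k=d=1]$. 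Finally, substituting $k=d=1$ into eqs.~(\ref{e:WMBR}) and~(\ref{e:WMSR}) shows both reduce to $\frac{\eta[1,a,p]}{a}\cdot\frac{O\mathcal M}{N\mathbb{E}[L]}$, i.e. $W_\text{MBR}[k=d=1]=W_\text{MSR}[k=d=1]$ (indeed for $k=1$ one has $\alpha_\text{MBR}=\gamma_\text{MBR}=\mathcal M$, so the code is literally replication). Chaining the three relations yields $W_\text{MBR}\leq W_\text{MSR}[k=d=1]$.

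Alternatively, and closer to the informal argument given just before the lemma, one can skip the first move: Lemma~\ref{l:bwmbr} applied at $k=d=1$ already gives $W_\text{MBR}[k=d=1]\leq W_\text{MSR}[k=d=1]$, and Remark~\ref{r:mbrdecr} gives $W_\text{MBR}[k]\leq W_\text{MBR}[k=1]$; the only gap to close is that ``$k=1$'' leaves the repair degree unspecified, which is harmless precisely because $W_\text{MBR}[k=1,d]$ is independent of $d$. That $d$-bookkeeping is the single delicate point in either route; everything else is a direct substitution into the boxed formulas.
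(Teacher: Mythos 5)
Your proposal is correct and is essentially the paper's own argument: the paper likewise invokes Remark~\ref{r:mbrdecr} to reduce to the worst case $k=1$ and then compares MBR with replication at $k=d=1$, where $\gamma_\text{MBR}=\gamma_\text{MSR}=\mathcal M$ (the paper merely dresses this last equality up as a one-line proof by contradiction, $1>1$). The only additions on your side are cosmetic but welcome: you verify explicitly that $W_\text{MBR}$ is non-increasing in $d$ (equivalently, that $W_\text{MBR}[k=1,d]$ is $d$-independent), a bookkeeping point the paper leaves implicit when it writes $W_\text{MBR}[k=1]$ for $W_\text{MBR}[k=d=1]$.
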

\begin{proof}
If this lemma is true, then the per-node bandwidth of the MBR configuration that consumes the most  bandwidth must be
lower than or equal to the per-node bandwidth of replication. Since $W_\text{MBR}$ is largest for $k=1$ (see
Remark~\ref{r:mbrdecr}), we can rewrite this lemma as: $W_\text{MBR}\left[k=d=1\right]\leq
W_\text{MSR}\left[k=d=1\right]$. To proof it by contradiction we assume that $W_\text{MBR}\left[k=1\right]>
W_\text{MSR}\left[k=d=1\right]$. Using equations~(\ref{e:WMSR}) and~(\ref{e:WMBR}) we obtain:
\begin{align*}
\gamma_\text{MBR}\left[k=d=1\right]\cdot\frac{\eta[1,a,p]~O}{a~N~\mathbb{E}[L]} &>
\gamma_\text{MSR}\left[k=d=1\right]\cdot\frac{\eta[1,a,p]~O}{a~N~\mathbb{E}[L]} \\
\gamma_\text{MBR}\left[k=d=1\right] &>
\gamma_\text{MSR}\left[k=d=1\right] \\
1 &> 1;
\end{align*}
which is a contradiction. 
\end{proof}

In Figure~\ref{f:replmbr} we plot the communication savings  a storage system makes when using a MBR code instead of
replication. The  savings have
the same asymptotic behavior than the bandwidth requirements, $W_\text{MBR}$, depicted in Figure~\ref{f:bwenc2}.
Since for MBR codes $\alpha_\text{MBR}=\gamma_\text{MBR}$, i.e.~the
storage block size is the same as the repair bandwidth,
the \textit{communication savings for MBR are the same as
the storage savings} listed in Table~\ref{t:savings}.

\begin{figure}
\centering
\subfloat[Savings when $d=k$.]{\label{f:bwenc:mbrlow}\includegraphics{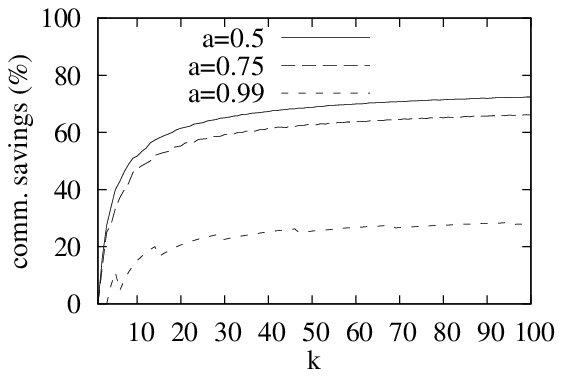}}
\subfloat[Savings when $d=n-1$.]{\label{f:bwenc:mbrhigh}\includegraphics{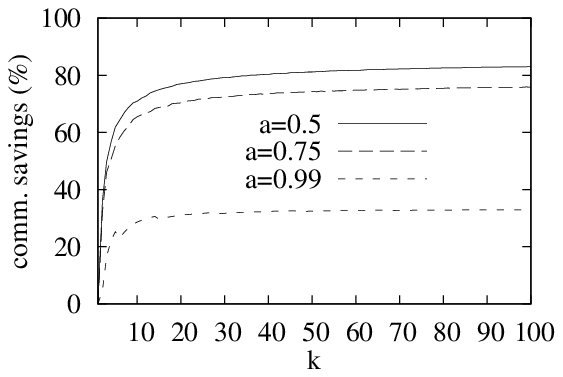}}
\caption{Reduction  of the communication cost by adopting a
MBR code instead of replication as function of $k$ a retrieve probability of $p=0.999999$.}
\label{f:replmbr}
\end{figure}

\section{Hybrid Repositories}
\label{s:hybrid}

In Section~\ref{s:costs} we saw that except for one a particular case (MSR codes and high node on-line availabilities),
MSR and MBR codes offer both, lower storage costs and lower communication costs than simple file replication. However,
there are some scenarios where the storage system needs to ensure that files can be accessed without the need of
decoding operations. For example,   storage infrastructures using replication~\cite{googlefs,hadoopfs} may not afford a
migration of their infrastructures from replication to erasure encodes. Other examples are on-line streaming services or
content distribution networks (CDNs) that need efficient access to stored files without requiring complex decoding
operations.

As we saw in Section~\ref{s:costs}, maintaining whole file replicas (MSR codes with $k=d=1$) has a higher storage cost
than using coding schemes. However, when whole file replicas are required, storage systems can reduce this high cost by
using a\textit{ hybrid redundancy scheme that combines replication and erasure codes}.  The replicas can also help
reduce the communication cost when repairing lost data by generating new redundant blocks \textit{using the on-line file
replicas}: Generating a redundant block from a file replica requires transmitting $\alpha$ bytes instead of the
$\gamma=d\cdot\beta$ bytes required by the normal repair process. From eqs.~(\ref{e:parmsr}) and~(\ref{e:parmbr}) it is
easy to see that $\alpha\leq\gamma$.  While some papers have studied hybrid redundancy
schemes~\cite{highavail,glacier,regeneratingcodes}, their aim was to reduce communication costs and not to guarantee
permanent access to replicated objects. Therefore, these papers assumed that \textit{only one} replica of each file
was maintained in the system, ignoring the two problems that arise when this replica goes temporarily off-line: (i) it
is not possible to access the file without decoding operations, and (ii) repairs using the replica are not possible.

In this section we evaluate a different hybrid scenario, where the storage system may maintain more than one replica of
the whole file in order to ensure with high probability that there is always one replica on-line. However, it is not
clear if the overall communication costs of our hybrid scheme will be lower than the communication costs of a single
replication scheme. Further, even if communication costs are reduced, the use of a double redundancy scheme (replication
and coding) may increase storage costs.  To the best of our knowledge, there is no prior work analyzing these aspects.
In our analysis we differentiate between the probability $p_\text{low}$ of having a file replica on-line, and the
retrieve probability $p$ for being able to retrieve a files using encoded blocks, which requires that $k$ out of a total
of $n$ storage blocks are on-line.  We  assume that $p_\text{low}\ll p$, for example $p_\text{low}=0.99$ and
$p=0.999999$, which is motivated by the fact that while users are likely to tolerate higher access times to a file,
which will need to be reconstructed first in some rare cases when no replicas are found on-line, but they require very
strong guarantees that  data is never lost.

\paragraph{Adapting Communication Cost to the Hybrid Scheme} In a hybrid scheme we need to consider two types of
repair traffic, namely (i) traffic $W_\text{MSR}[k=d=1]$, to repair
lost replicas and (ii) traffic $W^\text{repl}$  to repair encoded
blocks.
\CO{ We will separate
the minimum per-node bandwidth of a hybrid scheme in two parts: the bandwidth required to repair replicas,
$W_\text{MSR}[k=d=1]$, and the bandwidth required to repair encoded
blocks, $W^\text{repl}$.
}  
Since in the hybrid scheme
blocks are repaired directly from   a replicated copy, repairing an encoded block requires transmitting only
one new storage block of $\alpha$ bytes. We obtain $W^\text{repl}$ by
replacing in eq.~(\ref{e:deltatraff})   the  term ``traffic to repair a block'' in
by  $\alpha$. Arranging the terms we obtain the following two expressions:
\begin{align}
    W^\text{repl}_\text{MSR} &= \frac{\eta[k,a,p]}{ka} \times \frac{\mathcal M O}{N~\mathbb{E}[L]}\label{e:bwhybmsr}\\
    W^\text{repl}_\text{MBR} &= \frac{2d\cdot \eta[k,a,p]}{ka(2d-k+1)} \times \frac{\mathcal M
    O}{N~\mathbb{E}[L]}\label{e:bwhybmbr}
\end{align}
Note that these expressions assume that \textit{all lost blocks are repaired from replicas}. Since we are adopting a
proactive repair scheme, the system can delay individual repairs when no replicas are available. However, since replicas
are available most of the time, these delays will rarely happen.

Comparing $W^\text{repl}_\text{MSR}$, and $W^\text{repl}_\text{MBR}$
we can state  the following lemma:
\begin{lemma}
For the same $k$, $d$ and $p$ parameters, a hybrid scheme using a MBR
code has a communication cost that is at least as high as  the communication cost
of a hybrid scheme using a MSR code.
\label{l:hybbw}
\end{lemma}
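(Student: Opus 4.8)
The plan is to prove the lemma in the same style as Lemmas~\ref{l:rmsr} and~\ref{l:bwmbr}, namely by stating it as the inequality $W^\text{repl}_\text{MBR}\geq W^\text{repl}_\text{MSR}$ and comparing the two closed forms directly. First I would write down eq.~(\ref{e:bwhybmsr}) and eq.~(\ref{e:bwhybmbr}) side by side and observe that, once $k$, $d$ and $p$ (and hence the node availability $a$ and the value $\eta[k,a,p]$) are fixed, the two expressions share the common strictly positive factor $\frac{\eta[k,a,p]}{ka}\cdot\frac{\mathcal M O}{N\mathbb{E}[L]}$. Dividing it out, the claim collapses to the scalar inequality $\frac{2d}{2d-k+1}\geq 1$.

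Second, I would dispatch this scalar inequality: since $d\geq k\geq 1$ we have $2d-k+1>0$, so $\frac{2d}{2d-k+1}\geq 1$ is equivalent to $2d\geq 2d-k+1$, i.e.\ to $k\geq 1$, which always holds (with equality exactly when $k=1$, the replication case). An equivalent and perhaps more illuminating way to phrase the same argument is that $W^\text{repl}_\text{MSR}$ and $W^\text{repl}_\text{MBR}$ are obtained from eq.~(\ref{e:deltatraff}) by substituting the ``traffic to repair one block'' term with $\alpha_\text{MSR}$ and $\alpha_\text{MBR}$ respectively; hence the comparison is just $\alpha_\text{MBR}\geq\alpha_\text{MSR}$, which is exactly the fact underlying Lemma~\ref{l:rmsr} and follows from the definitions of MBR and MSR codes in~\cite{regeneratingcodes}.

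I do not expect any genuine obstacle: the statement is a one-line algebraic consequence of the form of the two cost expressions. The only point warranting care is verifying that the comparison is made under matched parameters — the same $k$, $d$, $p$, the same resulting $\eta[k,a,p]$, and the same $\frac{\mathcal M O}{N\mathbb{E}[L]}$ normalization — so that every shared factor legitimately cancels and the sign of $W^\text{repl}_\text{MBR}-W^\text{repl}_\text{MSR}$ is controlled solely by the ratio $\frac{2d}{2d-k+1}-1=\frac{k-1}{2d-k+1}\geq 0$.
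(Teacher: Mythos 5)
Your proposal is correct and follows essentially the same route as the paper: both cancel the common factor $\frac{\eta[k,a,p]}{ka}\cdot\frac{\mathcal{M}O}{N\,\mathbb{E}[L]}$ from eqs.~(\ref{e:bwhybmsr}) and~(\ref{e:bwhybmbr}) and reduce the claim to $\frac{2d}{2d-k+1}\geq 1$, i.e.\ $k\geq 1$. Your added remark that this is just the comparison $\alpha_\text{MBR}\geq\alpha_\text{MSR}$ is a nice observation but does not change the argument.
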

\begin{proof}
We can state the lemma as $W^\text{repl}_\text{MSR}\leq W^\text{repl}_\text{MBR}$. Using equations~(\ref{e:bwhybmsr})
and~(\ref{e:bwhybmbr}) we obtain:
\begin{align*}
\frac{\eta[k,a,p]}{ka} \times \frac{\mathcal M O}{N~\mathbb{E}[L]} &\leq
\frac{2d\cdot \eta[k,a,p]}{ka(2d-k+1)} \times \frac{\mathcal MO}{N~\mathbb{E}[L]} \\
1 &\leq \frac{2d}{2d-k+1} \\
2d-k+1 &\leq 2d \\
1 &\leq k
\end{align*}
which is true by the definition of Regenerating Codes.
\end{proof}

Lemma~\ref{l:hybbw} implies that MSR codes when used in hybrid schemes
are both, more storage-efficient and more  bandwidth-efficient than
MBR codes. For this reason we will not consider the use of MBR codes
in hybrid schemes. 

Let us assume that the required retrieve
probability for the whole hybrid system is $p$ and that the retrieve probability for replicated
objects is $p_\text{low}$, $p_\text{low}\ll p$. A hybrid scheme reduces the
storage cost compared to   replication when the following
condition is  satisfied:
\begin{equation}
\underbrace{R_\text{MSR}[k=1;~p_\text{low}] + R_\text{MSR}[k>1;~p]}_\text{hybrid storage costs} <
\underbrace{R_\text{MSR}[k=1;~p]}_\text{replication storage costs}.
\label{e:limitR}
\end{equation}
And analogously, a hybrid scheme reduces communication costs when:
\begin{equation}
\underbrace{W_\text{MSR}[k=1;~p_\text{low}] + W_\text{MSR}^\text{repl}[k>1;~p]}_\text{hybrid comm. costs} <
\underbrace{W_\text{MSR}[k=1;~p]}_\text{replication comm. costs}.
\label{e:limitW}
\end{equation}

\begin{table}
\tbl{Number of replicas required to achieve a retrieve probability $p_\text{low}$ for different node
availabilities $a$.\label{t:replicas}}{%
\centering
\begin{tabular}{|c|c|c|c|}\hline
{\bf Node availability} &\multicolumn{3}{c|}{\bf Number of replicas required} \\ \cline{2-4}
$a$ & $p_\text{low}=0.99$ & $p_\text{low}=0.98$ & $p_\text{low}=0.95$ \\ \hline
0.5 & 7 & 6 & 5\\ \hline
0.75 & 4 & 3 & 3\\ \hline
0.99 & 1 & 1 & 1 \\ \hline
\end{tabular}}
\end{table}

\begin{figure}
\centering
\subfloat[Storage efficient hybrid schemes (any $d$ value).]{\label{f:hybridr:msrlow}\includegraphics{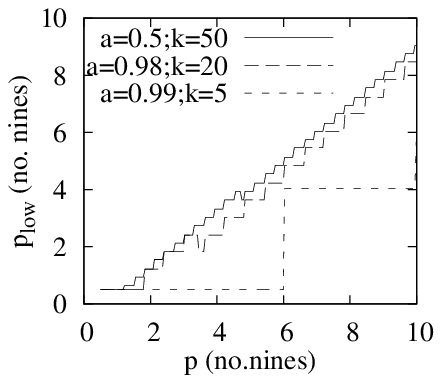}}
\subfloat[Bandwidth efficient hybrid schemes (when $d=k$).]{\label{f:hybridb:msrlow}\includegraphics{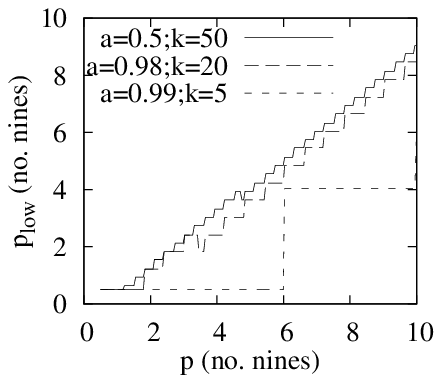}}
\subfloat[Bandwidth efficient hybrid schemes (when $d=n-1$).]{\label{f:hybridb:msrhigh}\includegraphics{simul/combining-bandwidth-msrlow.eps}}
\caption{ The $(p,p_\text{low})$-pairs under each of the lines represent the scenarios where a hybrid scheme
(replication+MSR codes) reduces the costs of a single replicated scheme. The lines are the maximum $p_\text{low}$ values
that satisfy eq.~(\ref{e:limitR}) for (a), and eq.~(\ref{e:limitW}) for (b) and (c).}
\label{f:hybrid}
\end{figure}

In Figure~\ref{f:hybridr:msrlow} we plot the maximum value for $p_\text{low}$ that satisfies eq.~(\ref{e:limitR}) as a
function of the overall retrieve probability $p$ for different on-line node availabilities $a$. The $k$ parameter is set
to $k=50$ when $a=0.5$, $k=20$ when $a=0.75$ and $k=5$ when $a=0.99$.  The $(p,p_\text{low})$-pairs below each of the
lines correspond to the hybrid instances that satisfy eq.~(\ref{e:limitR}), i.e.~a hybrid scheme reduces the storage
costs.  Similarly, in figures~\ref{f:hybridb:msrlow} and~\ref{f:hybridb:msrhigh}, we plot the $(p,p_\text{low})$-pairs
that satisfy~eq.(\ref{e:limitW}), i.e.~a hybrid scheme reduces the communication costs.

As example, let us assume a storage system that wants 99\% data availability for their replicated files.  In this case
($p_\text{low}=0.99$), looking at Figure~\ref{f:hybrid} we see that a hybrid scheme (replication+MSR codes) can reduce
the storage costs compared to  replication only  when $p\geq0.999999$ for $a=0.99$, when $p\geq0.99$ for $a=0.75$, and
when $p\geq0.9$ for $a=0.5$. Since in general we always want strong guarantees that files are never lost |e.g., we
assume $p\geq0.999999$|, we can conclude that \textit{hybrid schemes reduce storage and communication cost for almost
all practical scenarios}.

It is interesting to note that in Figure~\ref{f:hybrid} all three sub-figures look very much alike. The reason is that
the cost contribution of replication is significantly higher than the cost contribution of the coding (see
Section~\ref{s:costs}).  Since we have demonstrated the cost efficiency of a hybrid scheme for $p_\text{low}=0.99$,
which requires a larger number of replicas than configurations with $p_\text{low}\leq0.99$, see Table~\ref{t:replicas},
a hybrid scheme  will also reduce storage and communication costs for any system requiring fewer replicas i.e.,
$p_\text{low}\leq0.99$.

\section{Experimental Evaluation}
\label{s:experiment}

In previous sections we presented our generic storage model based on Regenerating Codes and we  analytically analyzed
the storage and communication costs for   MSR and MBR codes, as well as the efficiency of using these codes in hybrid
redundancy schemes. In this section, we aim to evaluate how the network traffic caused by repair processes can affect
the performance and scalability of the redundancy scheme. For that, we assume a distributed storage system constrained
by its network bandwidth: a system where storage nodes have low upload bandwidth and nodes have low on-line
availabilities.  For such a storage system we will evaluate two measures that are difficult to obtain analytically: (i)
the real bandwidth used by the  repair process |i.e.,~bandwidth utilization|, and (ii) the repair time |i.e.,~time
required to download $d$ fragments.  In this way we can evaluate the impact of the  repair degree   $d$  on bandwidth
utilization and  system scalability.

\paragraph{Bandwidth utilization} Given a node \textbf{upload bandwidth}, $\omega$, and the per-node required bandwidth,
$W$, we can theoretically state that a feasible storage system must satisfy $\omega\geq W$, and that the storage system
reaches its maximum capacity when $\omega=W$. However, practical storage systems may not reach this maximum capacity
because of system inefficiencies due to failed repairs or fragment retransmissions. To measure these inefficiencies, we
will compare the real bandwidth utilization  $\hat\rho$  with the theoretical bandwidth utilization $\rho=W/\omega$.

\paragraph{Repair time} The repair time is proportional to the repair bandwidth, $\gamma$, the repair degree, $d$, and
the probability $a$ of finding a node on-line~\cite{retrievaltimes}.  We showed in Section~\ref{s:costs} that increasing
$d$ reduces the repair bandwidth $\gamma$, (see eqs.~(\ref{e:parmsr}) and~(\ref{e:parmbr})), which  should then
intuitively reduce repair times.  However, since the system only guarantees $k$ on-line nodes, contacting   $d>k$ nodes
may require  to wait for nodes coming back on-line, which will cause longer repair times. In previous sections we only
considered two repair degrees $d$, namely $d=k$ and $d=n-1$. In this section we will analyze how different $d$ values
affect repair times and bandwidth utilization.

\subsection{Simulator Set-Up}
\label{s:setup}

We implemented an event-based simulator that simulates a dynamic storage infrastructure. Initially, the simulator starts with
$N=500$ storage nodes. New node arrivals follow a Poisson process with average inter-arrival times $\mathbb{E}[L]/N$.  Node
departures follow a Poisson process with the same inter-departure time. Once a node joins the system it draws its
lifetime from an exponential distribution $L$ with expected value $\mathbb{E}[L]=100$ days. During their lifetime in the
system, nodes alternate between  on-line/off-line sessions. For each session, each node draws its on-line and off-line
durations from distributions $\mathcal X_\text{on}$ and $\mathcal X_\text{off}$ respectively. In this paper $\mathcal
X_\text{on}$ and $\mathcal X_\text{on}$ are exponential variates with parameters $1/(B\cdot a)$ and $1/(B(1-a))$
respectively, where $B$ is the base time and $a$ the node on-line availability. Using the mean value of the exponential
distribution we can compute the average duration of the on-line and off-line periods as (in hours): \begin{align}
\mathbb{E}[\mathcal X_\text{on}]&=B\cdot a \\ \mathbb{E}[\mathcal X_\text{off}]&=B\cdot (1-a) \end{align}

The simulator implements  parameterized Regenerating Code. To cope with node failures, redundant blocks are repaired in
a proactive manner following the algorithm defined in~\cite{proactiveestim} and the simulator proactively generates new
redundant blocks at a constant rate. For each stored object, a new redundant block is generated every $\mathbb{E}[L]/n$
days. To balance the amount of data assigned to each node, each repair is assigned to the on-line node that is lest
loaded in terms of the number of stored blocks and the number of repairs going on.

If the repair node disconnects during a repair process, the repair is aborted and restarted at another on-line node.
Similarly, when a node uploading data disconnects, the partially uploaded data is discarded and the repair node  starts
a block retrieval from another on-line node.

The number of objects stored in the system is set in all the simulations to achieve a desired system utilization $\rho$.
Given  $\rho$, the number of stored objects, $O$, is obtained using the two following expressions:
\begin{align}
O_\text{MSR} &= \frac{\omega\rho ak(d-k+1)}{d\cdot \eta[k,a,p]} \times \frac{N\mathbb{E}[L]}{\mathcal M} \\
O_\text{MBR} &= \frac{\omega\rho ak(2d-k+1)}{2d\cdot \eta[k,a,p]} \times \frac{N\mathbb{E}[L]}{\mathcal M}
\end{align}
These formulas are obtained by taking the definition of
utilization, $\rho=W/\omega$, replacing $W$ by $\rho \cdot \omega$ in
eq.~(\ref{e:bw}) and solving the equation for $O$.

We set the on-line node availability to $a=0.75$  and   we set $k=20$.
With these values, we use eq.~(\ref{e:n}) to compute the minimum number of redundant blocks, $n$, required to
achieve a retrieve probability $p=0.999999$: $\eta[20,0.75,0.999999]=47$.

Finally, the node upload bandwidth is set to $\omega=$20KB/sec, allowing only one concurrent upload per node. To
simulate asymmetric network bandwidth, we allow up to 3 concurrent
downloads per node, which makes a maximum download
bandwidth of 60KB/sec.

\subsection{Impact of the Repair Degree $d$}

\begin{figure}
\centering
\subfloat[Bandwidth utilization]{\label{f:exp1:util}\includegraphics{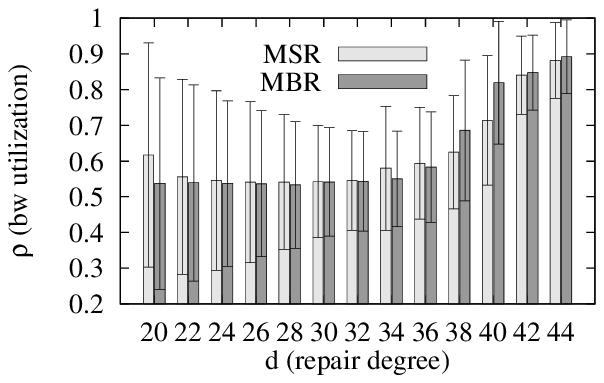}}
\subfloat[Repair times]{\label{f:exp1:time}\includegraphics{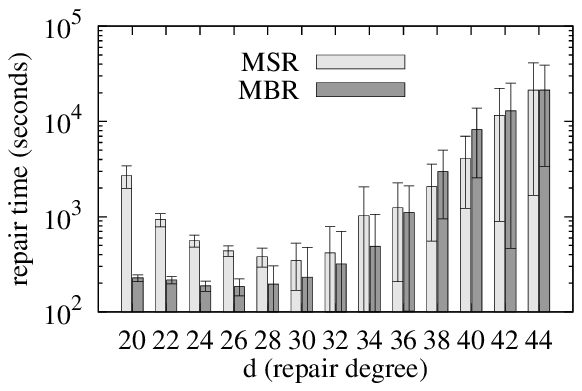}}\quad
\quad
\subfloat[Number of Objects]{\label{f:exp1:objs}\includegraphics{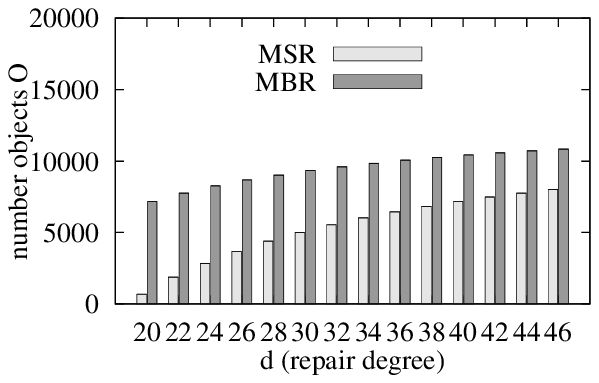}}
\subfloat[Overall Disk Utilization ($O\times\mathcal M\times R$)]{\label{f:exp1:red}\includegraphics{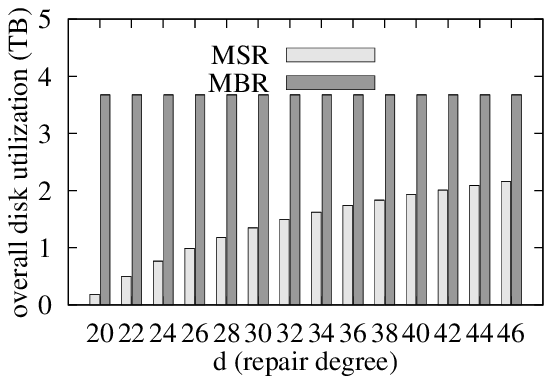}}
\caption{Bandwidth utilization and repair times for MSR and MBR and different repair degrees $d$ when the object size is
$\mathcal M=$120MB and the number of objects $O$ is set to achieve half bandwidth utilization $\rho=0.5$. The rest of the
parameters are set to: $k=20$ and $B=24$hours.}
\label{f:exp1}
\end{figure}

In Figure~\ref{f:exp1} we measure the effect of the repair degree   on the system utilization and on the repair
times. In this experiment, we set the size of the object to $\mathcal M=120$MB and the base time to $B=24$ hours
|i.e.~on average nodes connect and disconnect once per day. The number of stored objects is set to achieve a
bandwidth utilization of $\rho=0.5$. Figure~\ref{f:exp1:objs} shows
the number of objects $O$ for $\rho=0.5$, and
Figure~\ref{f:exp1:red}  the storage space required.
 Figures~\ref{f:exp1:util} and~\ref{f:exp1:time} show that small $d$
 values (values close to $k=20$) allow to keep the bandwidth
 utilization on target and assure low repair times. However, for repair degrees $d>34$ the repair times
start to increase exponentially.

It is interesting to see that when the repair times are quite long, nodes executing repairs may not finish their repairs
before disconnecting since repair times become longer than on-line sessions. In this case, failed repairs are reallocated
and restarted in other on-line nodes. These unsuccessful repairs cause
useless traffic that increase then the real bandwidth
utilization. In Figure~\ref{f:exp1:util} we can see how for $d>38$ repair times start to be larger than on-line
sessions, increasing utilization beyond $0.5$. It is important to note that these larger repair times can jeopardize the
reliability of the system: \emph{large $d$ values can cause most repairs to fail, reducing the amount of available
blocks and reducing the probability of successfully accessing stored files}.

\begin{figure}
\centering
\subfloat[Bandwidth utilization]{\label{f:exp3:util}\includegraphics{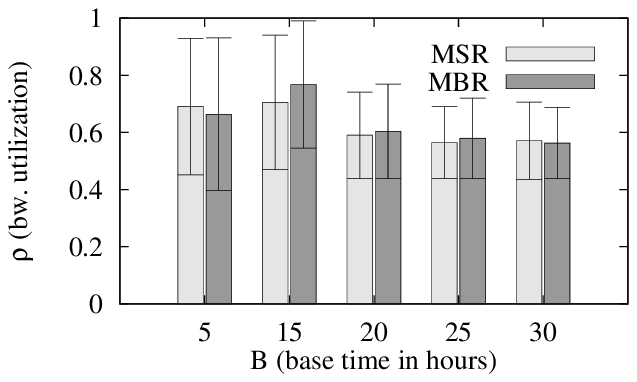}}
\subfloat[Repair times]{\label{f:exp3:time}\includegraphics{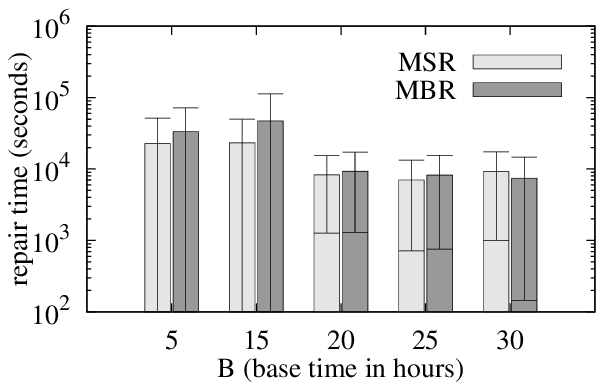}}
\caption{Bandwidth utilization and repair times for MSR and MBR and different base times $B$ when the object size is
$\mathcal M=$120MB and the number of objects $O$ is set to achieve a bandwidth utilization $\rho=0.5$. The rest of the
parameters are set to: $k=20$ and $d=36$. For the MSR case $O=5069$, and for MBR $O=10984$.}
\label{f:exp3}
\end{figure}

To investigate the increase of bandwidth utilization in detail, we analyze in Figure~\ref{f:exp3} the performance of the
storage system for the point where repair times begin to increase, $d=36$. At this point we evaluate repair times and
bandwidth utilization for different base times, $B$. As $B$ increases, the duration of on-line sessions become longer
and fewer repairs need to be restarted, theoretically reducing bandwidth utilization. We can see this effect in
Figure~\ref{f:exp3:util}, larger base times reduce the bandwidth utilization of the system. Due to this utilization
reduction, repair times are also slightly reduced as we can see in Figure~\ref{f:exp3:time}.

\subsection{Scalability}

\begin{figure}
\centering
\subfloat[Bandwidth utilization]{\includegraphics{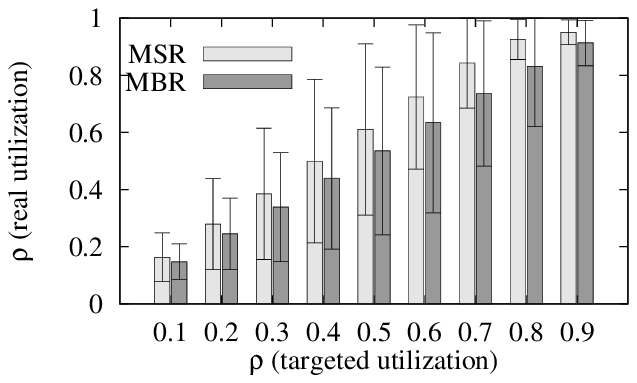}}
\subfloat[Repair times]{\includegraphics{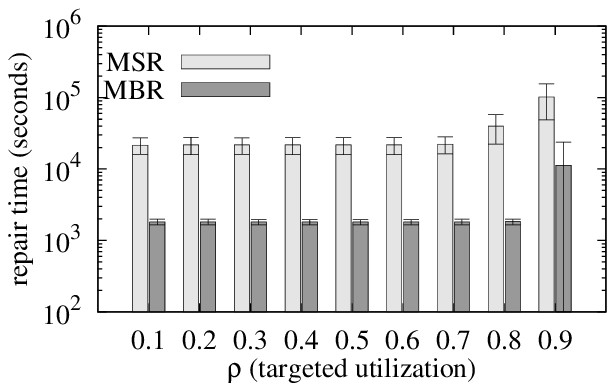}}
\quad
\subfloat[Number of Objects]{\label{f:exp2:objs}\includegraphics{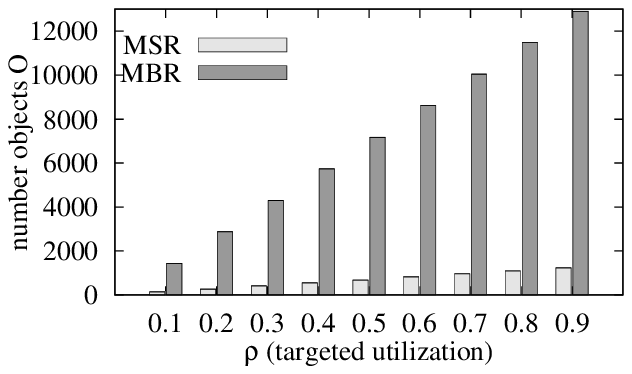}}
\caption{Bandwidth utilization and repair times for MSR and MBR and different targeted utilizations $\rho$ when the
object size is $\mathcal M=$120 MB and the number of objects $O$ is set to achieve the targeted $\rho$. The rest of the
parameters are set to: $k=20$, $B=24$ hours and $d=20$.}
\label{f:exp2}
\end{figure}

Other than the impact of the \emph{repair degree} $d$ and the base
time $B$ we aim to analyze the behavior of the storage system under
different target bandwidth utilizations. In Figure~\ref{f:exp2} we plot the
measured utilization and repair times for a wide range of target
utilizations $\rho$. We set the size of the stored objects to 120MB
and we increase the number of stored objects, $O$, to achieve
different utilizations. In this scenario we set $k=d=20$. In
Figure~\ref{f:exp2}a we  see how the measured utilization is nearly
the same than the target utilization. This is because $d=k$ causes
short repair times and repairs typically finish before nodes go off-line. However,
in Figure~\ref{f:exp2}b we can appreciate how for a high bandwidth
utilization of $\rho=0.9$, the saturation of the node upload queues
increases repair times significantly.

\begin{figure}
\centering
\subfloat[Bandwidth utilization]{\includegraphics{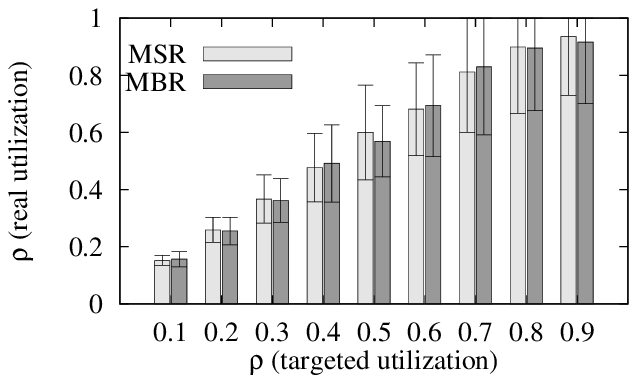}}
\subfloat[Repair times]{\includegraphics{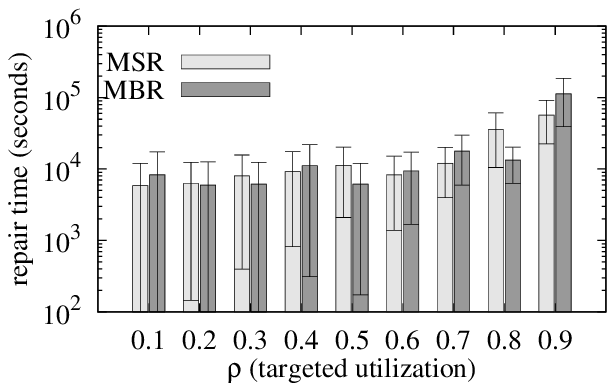}}
\quad
\subfloat[Number of Objects]{\label{f:exp2a:objs}\includegraphics{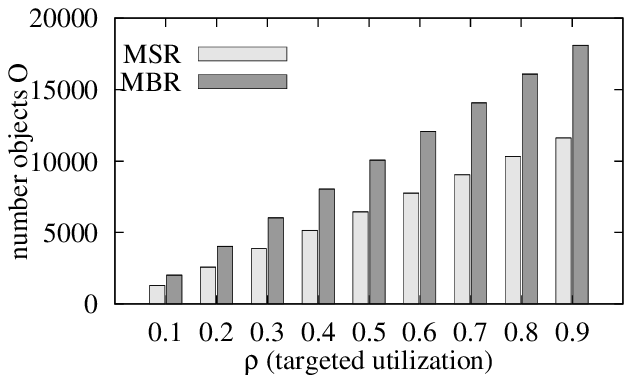}}
\caption{Bandwidth utilization and repair times for MSR and MBR and different targeted utilizations $\rho$ when the
object size is $\mathcal M=$120MB and the number of objects $O$ is set to achieve the targeted $\rho$. The rest of the
parameters are set to: $k=20$, $B=24$ hours and $d=36$.}
\label{f:exp2a}
\end{figure}

In Figure~\ref{f:exp2a} we plot the same metrics as in
Figure~\ref{f:exp2} but for a repair degree of $d=36$.
Increasing the repair degree causes longer retrieval times, however as
we saw in Figure~\ref{f:exp1},  
$d=36$ keep repairs short enough to guarantee that the utilization is
not affected. However,  by
increasing the repair degree from $d=20$ to $d=36$ we 
can store on the same system configuration  one order of magnitude
more objects, namely 6452 (MSR, $d=36$) instead of 
 683 (MSR, $d=20$).

\begin{figure}
\centering
\subfloat[Bandwidth utilization]{\includegraphics{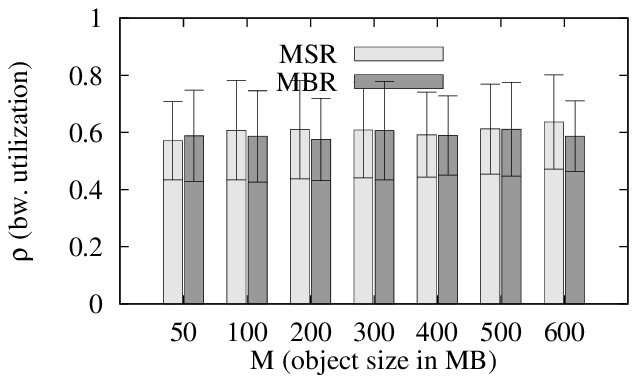}}
\subfloat[Repair times]{\includegraphics{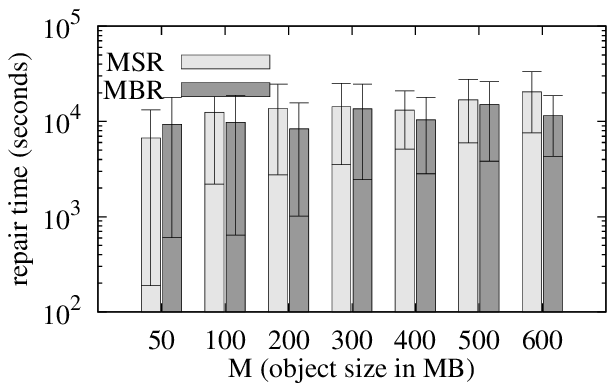}}
\quad
\subfloat[Number of Objects]{\label{f:exp4:objs}\includegraphics{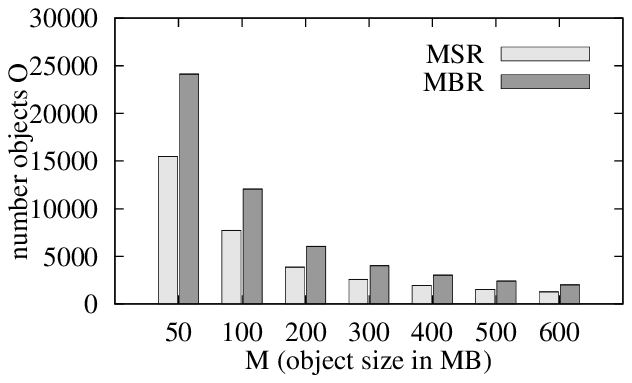}}
\caption{Bandwidth utilization and repair times for MSR and MBR and different object sizes $\mathcal M$ when the
number of objects $O$ is set to achieve a bandwidth utilization $\rho=0.5$. The rest of the
parameters are set to: $k=20$, $B=24$hours and $d=36$.}
\label{f:exp4}
\end{figure}

Finally, in Figure~\ref{f:exp4} we analyze the impact of object size
on bandwidth utilization and repair times.  For each object size we
set the number of stored objects to achieve a target bandwidth
utilization of $\rho=0.5$.  Since the utilization is the same for all
object sizes, the number stored objects, $O$, decreases as the object
size increases (Figure~\ref{f:exp4}c). Independently of the object
size, the total amount of stored data, $O\times\mathcal M$ remains
constant: 774GB for MSR codes and 1206GB for MBR codes. We can also
see in Figure~\ref{f:exp4}a that the measured bandwidth utilization is
independent of the object size. However, as expected, we can see in
Figure~\ref{f:exp4}b that larger objects take longer to repair.

\section{Conclusions}
\label{s:conclusions}

In this paper we evaluated redundancy schemes for distributed storage
systems in order   to have a clearer understanding of the  cost trade-offs in distributed storage systems.
Specifically, we analyzed the performance of the generic family of erasure codes called Regenerating
Codes~\cite{regeneratingcodes}, and the use of Regenerating Codes in
hybrid redundancy schemes. For each parameter
combination   we analytically derived its storage and communication costs  of
Regenerating Codes.
Our cost analysis is novel
in that it takes into account the effects of on-line node availabilities and node lifetimes. Additionally, we used
an event-based simulator to evaluate the effects of network utilization on the scalability of different redundancy
configurations. Our main results are as follows:

\begin{itemize}
\item Compared to simple replication, the use of a Regenerating Codes can reduce the costs of a storage system (storage
and communication costs) from 20\% up to 80\%.
\item The optimal value of the retrieval degree $k$ depends on the on-line node availability, ranging from $k=5$ when
nodes have 99\% availability, to $k=50$ when nodes have 50\% availability. Once $k$ is fixed, storage systems with
limited storage capacity can maximize their storage capacity by adopting MSR codes. On the other hand, systems with
limited communications bandwidth can maximize their storage capacity by adopting MBR codes.
\item High repair degrees $d$ reduce the overall communication costs
  but may increase repair times significantly, which can lead to data loss. 
We experimentally found that the repair degree should be small enough to make sure the repair times are shorter than the
on-line session durations of nodes.
\item Finally, in storage systems where the access to whole file replicas is required, we showed that hybrid schemes
combining replication and MSR codes are more cost efficient than simple replication.
\end{itemize}

\begin{acks}
This work was done during a visit of the first author at Eurecom in Spring 2010.
We would like to thank Matteo Dell'Amico and Zhen Huang for their helpful comments and their support of this research.
\end{acks}

\bibliographystyle{acmsmall}
\bibliography{citations}

\end{document}